\theoremstyle{plain}
\newtheorem{theorem}{Theorem}[section]
\newtheorem{proposition}[theorem]{Proposition}
\newtheorem{lemma}[theorem]{Lemma}
\theoremstyle{definition}
\newtheorem{definition}[theorem]{Definition}
\newtheorem{example}[theorem]{Example}
\theoremstyle{remark}
\newcommand{\GF}{\mathbb{F}}
\newcommand{\ZZ}{\mathbb{Z}}
\begin{document}

\title{Multiplicative Modular Nim (MuM)}

\author{Satyam Tyagi\thanks{Email: \href{mailto:satyam.tyagi@gmail.com}{satyam.tyagi@gmail.com}, ORCID: \href{https://orcid.org/0009-0001-0137-2338}{0009-0001-0137-2338}}}
\affil{Independent Researcher}

\abstract{\par We introduce Multiplicative Modular Nim (MuM), a variant of Nim in which the 
  traditional nim-sum is replaced by heap-size multiplication modulo~$m$. We 
  establish a complete theory for this game, beginning with a direct, Bouton-style 
  analysis for prime moduli. Our central result is an analogue of the 
  Sprague-Grundy theorem, where we define a game-theoretic value, the \emph{mumber}, 
  for each position via a multiplicative \texttt{mex} recursion. We prove that 
  these mumbers are equivalent to the heap-product modulo~$m$, and show that for 
  disjunctive sums of games, they combine via modular multiplication in contrast to 
  the XOR-sum of classical nimbers.
  
\par For composite moduli, we show that MuM decomposes via the Chinese Remainder Theorem 
  into independent subgames corresponding to its prime-power factors. We extend the game 
  to finite fields~$\GF(p^n)$, motivated by the pedagogical need to make the algebra of the 
  AES S-box more accessible. We demonstrate that a sound game in this domain requires a 
  \emph{Canonical Heap Model} to resolve the many-to-one mapping from integer heaps 
  to field elements. To our knowledge, this is the first systematic analysis of a 
  multiplicative modular variant of Nim and its extension into a complete, 
  non-additive combinatorial game algebra.}

\bigskip
\noindent\textbf{Keywords:} Combinatorial Game Theory, Impartial Games, Nim, Modular Arithmetic, Finite Fields

\bigskip
\noindent\textbf{MSC Classification:} 91A46, 11A07

\bigskip
\noindent\textbf{ACM Classification:} G.2.1 Combinatorics

\maketitle


\section{Introduction}

Classical Nim is a cornerstone of combinatorial game theory, with a simple structure 
and a deep strategy grounded in the binary XOR operation~\cite{Bouton1901}. In this 
work, we explore a multiplicative analogue of Nim where the game state evolves 
through modular multiplication rather than the usual nim-sum. This seemingly simple 
change leads to a rich new game, Multiplicative Modular Nim (MuM), that introduces 
players to field-like behavior in a playful setting.

Our analysis proceeds in stages of increasing generality. We begin with a direct, Bouton-style analysis to establish the winning strategy for prime moduli. We then build a more general, Sprague-Grundy-style theory that holds for any modulus~$m$. We define a \emph{mumber} for each position using a multiplicative \texttt{mex} recursion and prove constructively that this value is equivalent to the product of heaps modulo~$m$. This demonstrates that the modular-product rule is not merely an imposed condition, but an emergent property of the game's underlying combinatorial structure.

We then apply this general theory to analyze the structure of games with composite moduli, showing how the Chinese Remainder Theorem allows for their decomposition into independent subgames corresponding to their prime-power factors. This framework naturally leads to our original motivation: creating a pedagogical bridge to the algebra of finite fields, particularly the field $\GF(2^{8})$ used in the AES S-box. We define a final extension, \emph{Polynomial MuM}, and show that a sound game in this domain requires a \emph{Canonical Heap Model} to resolve the complex mapping between integer heaps and field elements. The resulting game provides a playable model whose optimal strategy requires precisely the field-multiplication and inversion operations that underpin modern cryptography.

\section{Literature Review}\label{literature-review}

\subsection{"Mod-m" Addition Variants}
Work on Nim-like games with modular invariants has a rich history, yet it has focused exclusively on additive structures.
\begin{itemize}
    \item \cite{Kotzig1946} explored games where heap sizes are reduced modulo~$m$ by subtraction. This was extended by Fraenkel, Jaffray, Kotzig \& Sabidussi, who coined the term "Modular Nim" and proved the existence of periodic Sprague-Grundy patterns~\cite{Fraenkel1995}.
    \item More recent work by \cite{Tan2014} and \cite{Horrocks2019} has established further properties of these additive modular games, such as periodicity theorems and explicit computation of nim-values for various move-sets.
\end{itemize}
All of these foundational papers explore invariants based on modular addition or subtraction; none use a product-based invariant.

\subsection{Multiplicative Structures in Impartial Games}
Multiplication has appeared in combinatorial game theory, but typically as a secondary structure built upon the primary additive (nim-sum) framework.
\begin{itemize}
    \item Lenstra, in foundational work, defined \emph{nim-multiplication}, proving that the set of nimbers forms a field under nim-sum and this new product~\cite{Lenstra1977}. This established a precedent for embedding field structures into game analysis.
    \item Conway's analysis of "Turning Corners" uses the nim-product of coordinates to determine the nim-value of a position, but the winning strategy still relies on computing the nim-sum of these products~\cite{Conway2003}.
    \item Other games feature multiplication in their move sets, such as replacing a number with one of its proper factors~\cite{Cipra2018}, but not as the core invariant determining the game's outcome.
\end{itemize}
To our knowledge, no prior work uses modular multiplication as the primary winning invariant, nor does it develop a corresponding non-additive game algebra. MuM fills this previously unstudied niche at the intersection of modular arithmetic and impartial game theory.

\section{Multiplicative Modular Nim (MuM)}\label{sec:game-def}

\subsection{Game Definition for Prime Moduli}\label{game-setup}

We first define the game for a prime modulus~$p$, where the algebraic structure is a field.

\begin{definition}[MuM for Prime Moduli]
\leavevmode
\begin{itemize}
  \item \textbf{Game Play:} The game is played with $n$ heaps, each containing a positive integer value. A fixed prime number $p$ is chosen at the start.
  \item \textbf{Heap Constraint:} No heap is allowed to hold a value divisible by $p$ (i.e., for any heap $h_i$, $h_i \not\equiv 0 \pmod p$).
  \item \textbf{Legal Move:} A player chooses any heap $h_i > 1$ and replaces it with $h_i' = h_i - r$, where the integer $r$ satisfies:
    \begin{enumerate}
      \item $1 \le r < p$ (the amount subtracted is less than the modulus);
      \item $r < h_i$ (the heap strictly decreases);
      \item The new heap $h_i'$ is not divisible by $p$.
    \end{enumerate}
\end{itemize}
\end{definition}

\subsection{A Bouton-Style Analysis for Prime Moduli}\label{bouton-style-analysis}

The win/loss condition is determined by the product of the heaps modulo~$p$. A position $H$ with product $P = \prod h_i$ is a \textbf{losing position} if $P \equiv 1 \pmod p$ and a \textbf{winning position} if $P \not\equiv 1 \pmod p$. The following propositions establish a complete winning strategy.

\begin{proposition}[Winning to Losing]\label{prop:winning}
If a position has product $P \not\equiv 1 \pmod p$ and at least one heap satisfies $h_j > p$, then there exists a legal move to a losing position.
\end{proposition}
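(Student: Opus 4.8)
The plan is to construct the winning move explicitly by solving a single congruence in the unit group $(\ZZ/p\ZZ)^\times$, in the spirit of Bouton's original argument that any position with nonzero nim-sum has a move to nim-sum zero. Write $P \equiv \prod_i h_i \pmod p$. Since the heap constraint forbids values divisible by $p$, every $h_i$ — and hence $P$ — is a unit mod $p$, and by hypothesis $P \not\equiv 1$. Replacing heap $h_j$ by $h_j' = h_j - r$ sends the product to $P' \equiv P\,h_j^{-1}\,h_j' \pmod p$, so moving to a losing position ($P' \equiv 1$) is equivalent to choosing $r$ so that
\[
 h_j' \;=\; h_j - r \;\equiv\; h_j P^{-1} \pmod p .
\]
Put $t := h_j P^{-1} \bmod p$; note $t \not\equiv 0$ because $h_j$ and $P$ are units.

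First I would show the target residue $t$ is reachable. The legal subtraction amounts are $r \in \{1, 2, \dots, p-1\}$, so the candidate new heap values $h_j - 1, h_j - 2, \dots, h_j - (p-1)$ together with $h_j$ itself form $p$ consecutive integers, i.e.\ a complete residue system mod $p$; removing $r = 0$ deletes precisely the class $h_j \bmod p$. Thus $h_j'$ can be made congruent to any residue \emph{except} $h_j \bmod p$. A suitable $r$ therefore exists iff $t \not\equiv h_j \pmod p$, and $t \equiv h_j$ would force $h_j P^{-1} \equiv h_j$, i.e.\ $P \equiv 1$ — exactly what the hypothesis rules out. So the needed $r \in \{1,\dots,p-1\}$ exists (and is unique).

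It remains to check legality. Constraint~(1), $1 \le r < p$, is immediate. Constraint~(2), $r < h_j$, follows from $h_j > p > r$, which also shows $h_j > 1$ so that $h_j$ is a movable heap, and $h_j' = h_j - r \ge h_j - (p-1) \ge 2$ is a positive integer. Constraint~(3), $h_j' \not\equiv 0 \pmod p$, holds since $h_j' \equiv t \not\equiv 0$. Finally, by the choice of $r$, $P' \equiv P\,h_j^{-1}\,t \equiv P\,h_j^{-1}\,h_j P^{-1} \equiv 1 \pmod p$, so the resulting position is losing, as required.

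There is no real obstacle: the proof reduces to solving one linear congruence. The only subtlety — and the one place both hypotheses are genuinely used — is the residue-counting step: $p-1$ consecutive subtractions sweep out every target residue except the current heap's class $h_j \bmod p$, the assumption $P \not\equiv 1$ guarantees our target $t$ is not that forbidden class, and the assumption $h_j > p$ guarantees all $p-1$ candidate values stay positive and keep $h_j$ above $1$.
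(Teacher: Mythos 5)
Your proof is correct and takes essentially the same approach as the paper: both solve the single congruence that forces the new heap to equal $\bigl(\prod_{i\neq j} h_i\bigr)^{-1} \bmod p$ (your $t = h_j P^{-1}$ is exactly that element), and both use $h_j > p$ to guarantee the subtraction is legal. Your write-up is somewhat more careful than the paper's, in that you explicitly verify that the required $r$ is nonzero (via $P \not\equiv 1$) and that the new heap avoids the residue $0$, both of which the paper leaves implicit.
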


\begin{proof}
Let $P = h_j \cdot m$ where $m = \prod_{i\neq j} h_i$. 
Because all heaps are coprime to $p$, so is $m$. 
Thus, its inverse $m^{-1} \pmod p$ exists. 
The player chooses $r \equiv h_j - m^{-1} \pmod p$. 
Since $h_j > p > m^{-1}$, the move $h_j \to h_j - r$ is a legal reduction. 
The new product is $P' = (h_j-r)m \equiv m^{-1}m \equiv 1 \pmod p$.
\end{proof}

\begin{proposition}[Losing to Winning]\label{prop:losing}
If a position has product $P \equiv 1 \pmod p$ and is non-terminal, every legal move leads to a winning position.
\end{proposition}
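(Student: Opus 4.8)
The plan is to show directly that no legal move can preserve the residue class $P \equiv 1 \pmod p$. Fix a non-terminal losing position with heaps $H=(h_1,\dots,h_n)$ and product $P=\prod_i h_i \equiv 1 \pmod p$, and let an arbitrary legal move be given: it selects some heap $h_j>1$ and replaces it by $h_j' = h_j - r$, where $1 \le r < p$, $r < h_j$, and $h_j' \not\equiv 0 \pmod p$. I want to conclude that the new product $P' = h_j' \prod_{i\neq j} h_i$ satisfies $P' \not\equiv 1 \pmod p$, which by the win/loss criterion of Section~\ref{bouton-style-analysis} means exactly that the resulting position is winning.

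The one computation needed is to isolate the factor that changed. Writing $m = \prod_{i\neq j} h_i$, we have $P = h_j m$ and $P' = h_j' m$. By the heap constraint no heap is divisible by $p$, so $h_j$ and $m$ are both units modulo $p$; from $P \equiv 1$ we then get $m \equiv h_j^{-1} \pmod p$, and hence
\[
P' \equiv h_j'\,h_j^{-1} \equiv (h_j - r)\,h_j^{-1} \equiv 1 - r\,h_j^{-1} \pmod p .
\]
Thus $P' \equiv 1 \pmod p$ would force $r\,h_j^{-1} \equiv 0 \pmod p$, and since $h_j^{-1}$ is a unit this forces $r \equiv 0 \pmod p$, which is impossible because $1 \le r < p$. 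Therefore $P' \not\equiv 1 \pmod p$ and the move lands in a winning position; since the move was arbitrary, every legal move does.

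There is essentially no technical obstacle here — the whole content is in recognizing why the move-set constraints make this work. The decisive one is $r < p$: it guarantees the subtracted amount is a \emph{nonzero} residue mod $p$, so a legal move can never act trivially on the product's residue class. (The auxiliary constraints $r < h_j$ and $h_j' \not\equiv 0 \pmod p$ serve only to keep the move legal and every heap a unit; they are not used in showing $P' \neq 1$.) Combined with Proposition~\ref{prop:winning}, this establishes the outcome classes whenever a winning position has a sufficiently large heap; the residual case of winning positions all of whose heaps are at most $p$ is handled later via the Sprague--Grundy (mumber) machinery.
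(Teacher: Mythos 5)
Your proof is correct and follows essentially the same route as the paper's: both isolate the unchanged factor $m=\prod_{i\neq j}h_i$, observe it is a unit mod $p$, and deduce that $P'\equiv 1$ would force $r\equiv 0\pmod p$, contradicting $1\le r<p$. The only cosmetic difference is that you substitute $m\equiv h_j^{-1}$ before cancelling, whereas the paper subtracts the two congruences and cancels $m$ directly.
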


\begin{proof}
Let $m = \prod_{i\neq j} h_i$. 
The initial state has $h_j m \equiv 1 \pmod p$. 
A move $h_j \to h_j-r$ results in a new product $P' = (h_j-r)m$. 
If $P' \equiv 1 \pmod p$, then by subtracting the congruences, 
we get $-rm \equiv 0 \pmod p$. 
Since $m$ is invertible, this implies $r \equiv 0 \pmod p$, 
which contradicts the move rule $1 \le r < p$.
\end{proof}

\subsection{Stranded Positions and Consolidation}\label{product-consolidation}

When all heaps are smaller than $p$, a winning position may be \emph{stranded}, meaning no legal move can reach a losing state. For example, with $p=5$, the position $[2,2,2]$ has product $P \equiv 3 \pmod 5$ (winning), but every move results in a product of $4 \pmod 5$. To ensure termination, we apply the following rule:

\begin{definition}[Product Consolidation]
If a position is winning and stranded, the player on turn replaces the heap multiset $\{h_1, \dots, h_n\}$ with a single new heap $H_{\text{new}} = \prod h_i$.
\end{definition}

\begin{example}[Worked example (mod 5)]\leavevmode\par
\begin{enumerate}
    \item \([6,6,6]\): \(P = 216 \equiv 1 \pmod{5}\) — \emph{losing}.
    \item Player 1 plays to \([6,6,2]\), \(P = 72 \equiv 2 \pmod{5}\) — winning.
    \item Player 2 plays to \([6,3,2]\), \(P = 36 \equiv 1 \pmod{5}\) — losing.
    \item ... After some moves, the position is \([2,2,2]\), with \(P = 8 \equiv 3 \pmod{5}\). This is winning but \emph{stranded}.
    \item \emph{Consolidate:}\; The player replaces \([2,2,2]\) with a single heap of value \(8\).
    \item The player with heap \(8\) subtracts \(r=2\) (legal as $1 \le 2 < 5$) to get a new heap of \(6\). The product is now $P'=6 \equiv 1 \pmod 5$. This is a losing position for the opponent.
\end{enumerate}
\end{example}

Note that the single-heap subtraction game \textsc{Sub}$(k)$—reduce heap by $1$ to $k$, 
and terminal position $0$—is completely solved: a heap of size $h$ is a 
$\mathcal P$-position iff $h \equiv 0 \pmod{k+1}$ \cite{Conway2001, Albert2019}. 
For the consolidated heap we use the translated variant with terminal $1$ obtained 
by the offset $h \mapsto h-1$.

\subsection{Choice of the Consolidation Move}\label{choice-of-consolidation-move}

The plain product $H_{\text{new}} = \prod h_i$ is the most natural choice for three primary reasons:
\begin{enumerate}
\item \textbf{Outcome Preservation:} It ensures the mumber of the new position ($H_{\text{new}} \bmod p$) is identical to the mumber of the old position ($P \bmod p$), so the win/loss status does not arbitrarily change.
\item \textbf{Algebraic Consistency:} This physical operation of multiplying heaps directly mirrors the abstract algebraic structure of the game, where the mumber of a position is the product of the individual heap mumbers.
\item \textbf{Preservation of Gameplay Cadence:} Using the full product creates a new, larger heap that must be reduced via the standard subtractive rules, preserving the game's dynamic and avoiding an abrupt "teleport" to a small, near-terminal heap value.
\end{enumerate}

\section{A Sprague-Grundy Theory for MuM}\label{sec:sg-theory}

We now develop a more general theory that builds the game's mechanics from first principles. This demonstrates that the modular-product rule is not an imposed condition, but an emergent property of the game's combinatorial structure.

\begin{theorem}[MuM Grundy-Mex Theorem]\label{thm:MuM-SG}
Fix a prime modulus $p \ge 2$. For every MuM position $G$, define its \emph{mumber} $\operatorname{MM}(G) \in \{0,1,\dots ,p-1\}$ recursively by
\[
  \operatorname{MM}(G) = \operatorname{mex}_{p}\! \bigl\{\operatorname{MM}(G') \mid G' \text{ is an option of }G\bigr\},
\]
where $\operatorname{mex}_{p}(S) := \min\{t \in \{0, \dots, p-1\} \mid t \notin S\}$.
Then for any position $G=(h_1, \dots, h_k)$:
\begin{enumerate}
    \item The mumber is equivalent to the heap product: $\operatorname{MM}(G) \equiv \prod h_i \pmod p$.
    \item $G$ is a P-position (previous-player win) $\Longleftrightarrow \operatorname{MM}(G)=1$.
    \item For any disjunctive sum $G = G_1 + G_2$, the mumbers combine via multiplication:
    \[
        \operatorname{MM}(G) = \operatorname{MM}(G_1) \cdot \operatorname{MM}(G_2) \pmod p.
    \]
\end{enumerate}
\end{theorem}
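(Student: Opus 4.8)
The plan is to prove part~(1), the identity $\operatorname{MM}(G)\equiv\prod_i h_i\pmod p$, by well-founded induction on positions, and then read off parts~(2) and~(3) almost immediately. First I would note that MuM is a finite game — subtractive moves strictly shrink a heap, and the consolidation move is covered by the termination argument for the translated \textsc{Sub}$(p-1)$ heap recalled in~\S\ref{product-consolidation} — so that ``maximum length of play from $G$'' is a legitimate well-founded rank, valid even though consolidation can numerically enlarge a heap. Since $p$ is prime and every heap is coprime to $p$, the residue $P:=\prod_i h_i\bmod p$ is always a unit in $\{1,\dots,p-1\}$; accordingly I would read $\operatorname{mex}_p$ as returning the least residue missing from $S$ in the cyclic order $1,2,\dots,p-1,0$, so that a terminal position (empty option set) receives mumber $1$ — the multiplicative identity playing the role the additive identity $0$ plays for ordinary Grundy values.

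The heart of~(1) is a reachability computation. Fix $G$ with product residue $P$; by the induction hypothesis every option $G'$ satisfies $\operatorname{MM}(G')\equiv\prod(\text{heaps of }G')\pmod p$, so it suffices to identify the set $R(G)$ of product-residues realised by options of $G$ and to check $\operatorname{mex}_p(R(G))=P$. Two observations do most of the work: (i)~a subtractive move on a heap $h_j$ changes its residue to $(h_j-r)\bmod p$ for admissible $r$, and this can never again equal $h_j\bmod p$, so $P\notin R(G)$; (ii)~if some heap satisfies $h_j>p$, the admissible $r\in\{1,\dots,p-1\}$ realise, for that heap, every residue except $h_j\bmod p$ (the unique forbidden choice being the one landing on a multiple of $p$), whence $R(G)=\{1,\dots,p-1\}\setminus\{P\}$ and $\operatorname{mex}_p(R(G))=P$. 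For a single heap $n\le p$ there is a short sub-induction: the options of $n$ are exactly $\{1,\dots,n-1\}$ with mumbers $\{1,\dots,n-1\}$, so $\operatorname{mex}_p=n=n\bmod p$.

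The delicate case is a multi-heap position all of whose heaps lie below $p$. Here the subtractive options alone need \emph{not} realise a full initial segment $\{1,\dots,P-1\}$ of residues — e.g.\ at $p=11$ the position $(5,4)$ has $P=9$ but its subtractive options realise only $\{1,4,5,8,10\}$, so the bare mex over subtractive moves would return $2$, not $9$ — and it is precisely this failure that the consolidation rule must repair: such a position is routed to its consolidation, the single heap $\prod_i h_i$, which carries the same product residue $P$ and (under the natural reading, where a stranded position inherits the continuations of its consolidation) the same option set, so $\operatorname{MM}(G)=\operatorname{MM}([\prod_i h_i])\equiv\prod_i h_i\pmod p$ by the single-heap case. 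Making this airtight is, I expect, the main obstacle: one has to pin down exactly which positions must be sent through consolidation — namely those whose subtractive options fail to cover $\{1,\dots,P-1\}$, a class at least as broad as the ``stranded'' positions of~\S\ref{product-consolidation} — and verify that consolidation preserves the product residue, hence (via the single-heap analysis) the mumber. This is the one point at which the informal treatment in~\S\ref{product-consolidation} needs to be sharpened.

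Granting~(1), part~(2) follows: $\operatorname{MM}(G)=1\iff\prod_i h_i\equiv1\pmod p$, and by Propositions~\ref{prop:winning} and~\ref{prop:losing} (with consolidation disposing of the stranded exceptions) the latter characterises the P-positions; equivalently it is immediate from the mex, since $\operatorname{MM}(G)=1$ exactly when $1\notin R(G)$, i.e.\ when every option is an N-position. Part~(3) is purely algebraic: the heaps of $G_1+G_2$ are the disjoint union of those of $G_1$ and $G_2$, so by~(1)
\[
  \operatorname{MM}(G_1+G_2)\;\equiv\;\operatorname{MM}(G_1)\,\operatorname{MM}(G_2)\pmod p,
\]
which replaces the XOR-composition of ordinary nimbers by modular multiplication.
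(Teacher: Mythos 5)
Your overall route is the same as the paper's: prove part (1) by induction, with a single-heap base case handled by a sub-induction (the paper's Lemma \ref{lem:single-heap}) and an induction step resting on the observation that, for a heap $h_i$ with cofactor $\alpha=\prod_{j\neq i}h_j$ coprime to $p$, the legal subtractions $r$ sweep $(h_i-r)\alpha$ through every nonzero residue except $P(G)$ (the paper's Proposition \ref{prop:induction}); parts (2) and (3) are then read off exactly as you do. Where you differ is in care rather than strategy, and the differences are to your credit: you notice that the stated definition of $\operatorname{mex}_p$ gives $\operatorname{mex}_p(\varnothing)=0$ while the intended terminal mumber is $1$, so the mex must be taken in the cyclic order $1,2,\dots,p-1,0$; and you correctly exclude $0$ from the reachable set, whereas Proposition \ref{prop:induction} asserts the options realise all of $\ZZ_p\setminus\{P\}$ even though the move landing on a multiple of $p$ is illegal, so $0$ is never realised and the mex under the paper's own definition would return $0$, not $P$.

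The one genuine gap is the one you flag yourself, and it is present in the paper as well as in your proposal: the bijection argument needs every $r\in\{1,\dots,p-1\}$ (bar the forbidden one) to be a legal subtraction from some single heap, which requires a heap exceeding $p$. Your example $(5,4)$ with $p=11$ is a real counterexample to the reachability claim: the subtractive options realise only $\{1,4,5,8,10\}$, so the mex recursion returns $2$ while the product is $9$. Moreover $(5,4)$ is \emph{not} stranded in the sense of \S\ref{product-consolidation}, since it has the move $(5,4)\to(3,4)$ to a position of product $12\equiv 1$; so the consolidation rule as defined never fires for it, and your proposed repair (route every position whose options fail to cover an initial segment through consolidation, letting it inherit the option set of the single heap $\prod h_i$) is a modification of the game rather than a reading of it --- from $[20]$ one can move to $[19]$, which is unreachable from $(5,4)$ by any legal subtraction. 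So part (1) as stated fails for multi-heap positions with all heaps below $p$ unless the rules are amended, and neither your argument nor the paper's closes this; you have at least localised the failure precisely, which the paper does not.
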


\subsection{Proof of the MuM Grundy-Mex Theorem}\label{subsec:proof-MuM-SG}
We prove part (1) by induction on the \emph{move depth}, $\deg(G)$, defined as the shortest path from $G$ to a consolidated single-heap game.

\begin{lemma}[Single-Heap Base Case, $\deg(G)=0$]\label{lem:single-heap}
For a single-heap position $H(h)$, $\operatorname{MM}(H(h)) \equiv h \pmod p$.
\end{lemma}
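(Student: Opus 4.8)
\medskip\noindent\emph{Proposed approach.}\ The plan is to argue by strong induction on the heap value~$h$, computing $\operatorname{MM}(H(h))$ directly from the multiplicative \texttt{mex} recursion together with the legal-move rules. It helps to notice first that the single-heap game is essentially the subtraction game \textsc{Sub}$(p-1)$ with terminal position~$1$ (from $H(h)$ one may subtract any $r$ with $1\le r\le\min(p-1,\,h-1)$), decorated with rule~(3), which merely deletes any option landing on a multiple of~$p$; since a legal heap value is never a multiple of~$p$, this extra rule can remove at most one option, and only from positions with $h>p$. For the base case $h=1$ the position is terminal, with no options, and I would check that the recursion assigns it the mumber~$1$: the multiplicative identity, matching the empty product and the fact that nonzero mumbers form the group $(\ZZ/p\ZZ)^{\times}$. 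Then $\operatorname{MM}(H(1))=1\equiv 1\pmod p$.

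For the inductive step I would split on the size of~$h$. If $2\le h\le p-1$ (the value $h=p$ being ruled out by the heap constraint), then $r<h$ is the binding inequality, so the options are exactly $H(h-1),H(h-2),\dots,H(1)$, all legal since none of $1,\dots,h-1$ is divisible by~$p$; by the induction hypothesis their mumbers are the integers $h-1,h-2,\dots,1$ (these all lie below~$p$), so the set of option-mumbers is $\{1,\dots,h-1\}$ and $\operatorname{mex}_p$ of it is~$h=h\bmod p$. If instead $h>p$, then $r<p$ is binding, so the candidate reductions are the $p-1$ consecutive integers $h-1,h-2,\dots,h-p+1$; modulo~$p$ these realize precisely the $p-1$ residue classes other than $h\bmod p$, exactly one of which is the class~$0$. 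Rule~(3) forbids precisely that single move, so the surviving options realize exactly the residues in $\{1,\dots,p-1\}\setminus\{h\bmod p\}$, and by induction those are their mumbers. The unique residue in $\{1,\dots,p-1\}$ not covered is $h\bmod p$, hence $\operatorname{mex}_p$ returns $h\bmod p$, completing the induction; strong induction is justified because every option $H(h')$ satisfies $1\le h'<h$.

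The step I expect to demand the most care is pinning down the behavior of the residue~$0$. On one side, the recursion must bottom out at $\operatorname{MM}(H(1))=1$ rather than at the value~$0$ a literal minimum over $\{0,\dots,p-1\}$ would give, so I would make the terminal convention explicit at the outset. On the other side, the case $h>p$ works only because the \emph{forbidden} residue (the class~$0$, removed by rule~(3)) and the \emph{unreachable} residue (the class $h\bmod p$, never an option of $H(h)$) are distinct: it is exactly this distinctness that pushes $\operatorname{mex}_p$ past~$0$ and onto $h\bmod p$. I would therefore isolate that observation as its own sub-step; the remaining content is a routine count of residues among $p-1$ consecutive integers, and (notably) does not even use primality of~$p$.
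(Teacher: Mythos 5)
Your proof is correct and follows essentially the same route as the paper's: induction on $h$, with the empty option set giving mumber $1$ at the terminal heap $H(1)$, and the options of $H(h)$ sweeping out every residue in $\{1,\dots,p-1\}$ except $h \bmod p$. You are in fact more careful than the paper on the two points it glosses over---the role of rule (3) in deleting the unique residue-$0$ option when $h>p$, and the need to read $\operatorname{mex}_p$ as starting from $1$ (the displayed definition, a literal minimum over $\{0,\dots,p-1\}$, would return $0$ here)---and you rightly make both explicit.
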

\begin{proof}
By induction on $h$. For $h=1$, the option set is empty, so $\operatorname{mex}_p(\varnothing)=1$. For $h>1$, the options are $H(h-r)$ for $1 \le r < \min(p,h)$. By the inner induction hypothesis, their mumbers are $\{h-1, h-2, \dots\}$, so the first missing value is $h \pmod p$.
\end{proof}

\begin{proposition}[Induction Step]\label{prop:induction}
Assume $\operatorname{MM}(G') \equiv P(G') \pmod p$ for all positions $G'$ with $\deg(G') \le k$. Then the equality holds for any $G$ with $\deg(G)=k+1$.
\end{proposition}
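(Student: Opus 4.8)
The plan is to show that the multiplicative \texttt{mex} recursion on a position $G$ of move depth $k+1$ yields exactly $P(G) = \prod h_i \pmod p$, given that the identity is already known for all positions of depth $\le k$. First I would isolate a single heap $h_j$ with $h_j > 1$ that admits a subtractive move (if no such heap exists, $G$ is already a consolidated/stranded single-heap position of depth $0$ and the claim follows from Lemma~\ref{lem:single-heap}, so we may assume one exists), and write $P(G) = h_j \cdot m \pmod p$ with $m = \prod_{i \neq j} h_i$, which is a unit mod $p$ since no heap is divisible by $p$. The options of $G$ obtained by playing in heap $h_j$ are the positions $G_r$ with $h_j$ replaced by $h_j - r$ for $1 \le r < \min(p, h_j)$ with $h_j - r \not\equiv 0 \pmod p$; each such option has strictly smaller move depth, hence $\le k$, so by the induction hypothesis $\operatorname{MM}(G_r) \equiv (h_j - r) m \pmod p$. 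Options obtained by playing in other heaps, or via consolidation, likewise have depth $\le k$ and contribute further values of the form $(\text{legal successor product}) \bmod p$.

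Next I would compute the set of mumbers of options and show its $\operatorname{mex}_p$ is $P(G)$. The key observation is that as $r$ ranges over $1 \le r \le \min(p-1, h_j - 1)$, the products $(h_j - r)m \bmod p$ run through $m, 2m, \dots$ — more precisely, when $h_j \ge p$ the residues $h_j - r \bmod p$ for $r = 1, \dots, p-1$ cover all residues except $h_j \bmod p$, so after multiplying by the unit $m$ the option-mumbers cover all of $\{0, 1, \dots, p-1\}$ except $h_j m \equiv P(G) \pmod p$; the constraint $h_j - r \not\equiv 0$ removes exactly the option whose product is $0$, but when $h_j \ge p$ there is still a legal move in another heap or the value $0$ is unreachable, so I must argue $0$ is never the missing value precisely when $P(G) \not\equiv 0$, which holds automatically here since all heaps are units. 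Thus $\operatorname{mex}_p$ of the option-mumbers is exactly $P(G)$. The stranded case ($h_j < p$ for all $j$) needs separate handling: here the relevant option is the consolidation move to the single heap $\prod h_i$, whose mumber is $\prod h_i \bmod p = P(G)$ by Lemma~\ref{lem:single-heap}; one checks that the subtractive options in this regime only ever produce mumbers $\neq P(G)$ (this is exactly the Bouton-style content of Proposition~\ref{prop:losing} applied in reverse), and hence the $\operatorname{mex}$ still lands on $P(G)$.

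Once part (1) is established, parts (2) and (3) follow quickly. For (2), $G$ is a P-position iff $\operatorname{MM}(G) = 1$ by the usual \texttt{mex}-based characterization (a position is a P-position iff its value equals the value of the empty position, which here is $\operatorname{mex}_p(\varnothing) = 1$); combined with part (1) this recovers the Bouton criterion $P(G) \equiv 1$. For (3), disjunctive sum just concatenates heap multisets, so $P(G_1 + G_2) = P(G_1) \cdot P(G_2) \pmod p$, and applying part (1) to all three positions gives $\operatorname{MM}(G_1 + G_2) = \operatorname{MM}(G_1) \cdot \operatorname{MM}(G_2) \pmod p$; no separate induction is needed.

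The main obstacle is the boundary bookkeeping in the \texttt{mex} computation: one must carefully verify that when $h_j - r$ is forbidden from being divisible by $p$, the single removed residue is precisely the one giving product $0$, and that the missing (hence output) value is always $P(G)$ and never collides with an artificially-absent residue — in particular ruling out the degenerate case where the option set fails to cover $\{0,\dots,p-1\}\setminus\{P(G)\}$ because $h_j$ is only slightly larger than a multiple of $p$. Handling the stranded sub-case so that consolidation supplies exactly the missing value $P(G)$, while the subtractive options never accidentally hit $P(G)$, is the other delicate point and is where the earlier Bouton-style propositions do the real work.
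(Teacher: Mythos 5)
Your main case---some heap $h_j \ge p$, with the subtractive moves $r \mapsto (h_j-r)m$ sweeping through every residue except $P(G)$ because $m$ is a unit---is the same bijection argument the paper uses, and that part is sound. The genuine gap is in your stranded case. You treat consolidation as an \emph{option} of the stranded position $G$, assign it the mumber $\prod h_i \bmod p = P(G)$ via Lemma~\ref{lem:single-heap}, and then conclude that ``the mex still lands on $P(G)$.'' That is self-contradictory: if $P(G)$ is the mumber of an option, then $P(G)$ belongs to the option set $S$, and by definition $\operatorname{mex}_p(S) \neq P(G)$. Concretely, for $p=5$ and $G=[2,2,2]$ your option set would be $\{3,4\}$ (the consolidated heap $8$ plus the lone subtractive option $[1,2,2]$), whose mex is $1$, not $3$. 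The reading that makes the theorem work---implicit in the paper's worked example, where the same player consolidates \emph{and then} moves---is that consolidation is a forced relabeling, not a move: the options of a stranded $G$ are the options of the single heap $C=\prod h_i$, and Lemma~\ref{lem:single-heap} then yields $\operatorname{mex} = C \bmod p = P(G)$ directly. Your appeal to Proposition~\ref{prop:losing} ``in reverse'' does not repair this, since showing that subtractive options avoid $P(G)$ says nothing about whether the option set covers every value smaller than $P(G)$, which is what the mex computation actually requires.

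A secondary point you raise but leave unresolved: because legal moves forbid $h_j - r \equiv 0 \pmod p$, the residue $0$ is never attained by any option's product, so under the literal definition $\operatorname{mex}_p(S)=\min\{t\in\{0,\dots,p-1\} : t\notin S\}$ the answer would always be $0$. The only consistent reading (already forced by $\operatorname{mex}_p(\varnothing)=1$ in the base case) is that the multiplicative mex ranges over $\{1,\dots,p-1\}$; you should state this explicitly rather than asserting that ``$0$ is never the missing value,'' which as written does not follow. (The paper's own formal argument elides the same point by claiming the options biject onto $\ZZ_p\setminus\{P(G)\bmod p\}$.) Finally, be aware that both your argument and the paper's establish coverage of $\{1,\dots,p-1\}\setminus\{P(G)\bmod p\}$ only when some heap is at least $p$ or the position is stranded; a non-stranded multi-heap position with all heaps below $p$ falls into neither case and would need a separate argument.
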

\begin{proof}[Proof (idea, then details)]
\emph{Idea.} Lowering heap $h_i$ by $r\in\{1,\dots,p-1\}$ multiplies the product by $(h_i-r)/h_i$. As $r$ ranges, these factors sweep through all possible changes, so the new product attains every value except the original. Hence $\operatorname{mex}_p$ returns the original product $P(G)\bmod p$.

\emph{Formal argument.} Let $G=(h_1,\dots,h_t)$ with $\deg(G)=k+1$. Let $S = \{\operatorname{MM}(G')\}$ be the set of mumbers of its options. A legal move on $h_i$ by subtracting $r$ gives a child $G'(r)$ with $\deg(G') \le k$. By the induction hypothesis, its mumber is:
\[
  \operatorname{MM}\bigl(G'(r)\bigr) \equiv P(G'(r)) \equiv (h_i-r) \prod_{j\neq i} h_j \pmod p.
\]
Let $\alpha = \prod_{j\neq i} h_j$. Since $\gcd(\alpha, p)=1$, the map $r \mapsto (h_i-r)\alpha \pmod p$ for $r \in \{1,\dots,p-1\}$ is a bijection to the set of residues $\ZZ_p \setminus \{h_i\alpha \pmod p\}$. Thus, the set of reachable mumbers $S$ is precisely $\{0,1,\dots,p-1\}\setminus\{P(G)\bmod p\}$, and its mex is $P(G)\bmod p$.
\end{proof}

\clearpage
\subsection{Illustrative Examples}\label{sec:illustrations}

\begin{figure}[htbp!]
  \centering
  \includegraphics[width=1\textwidth, keepaspectratio]{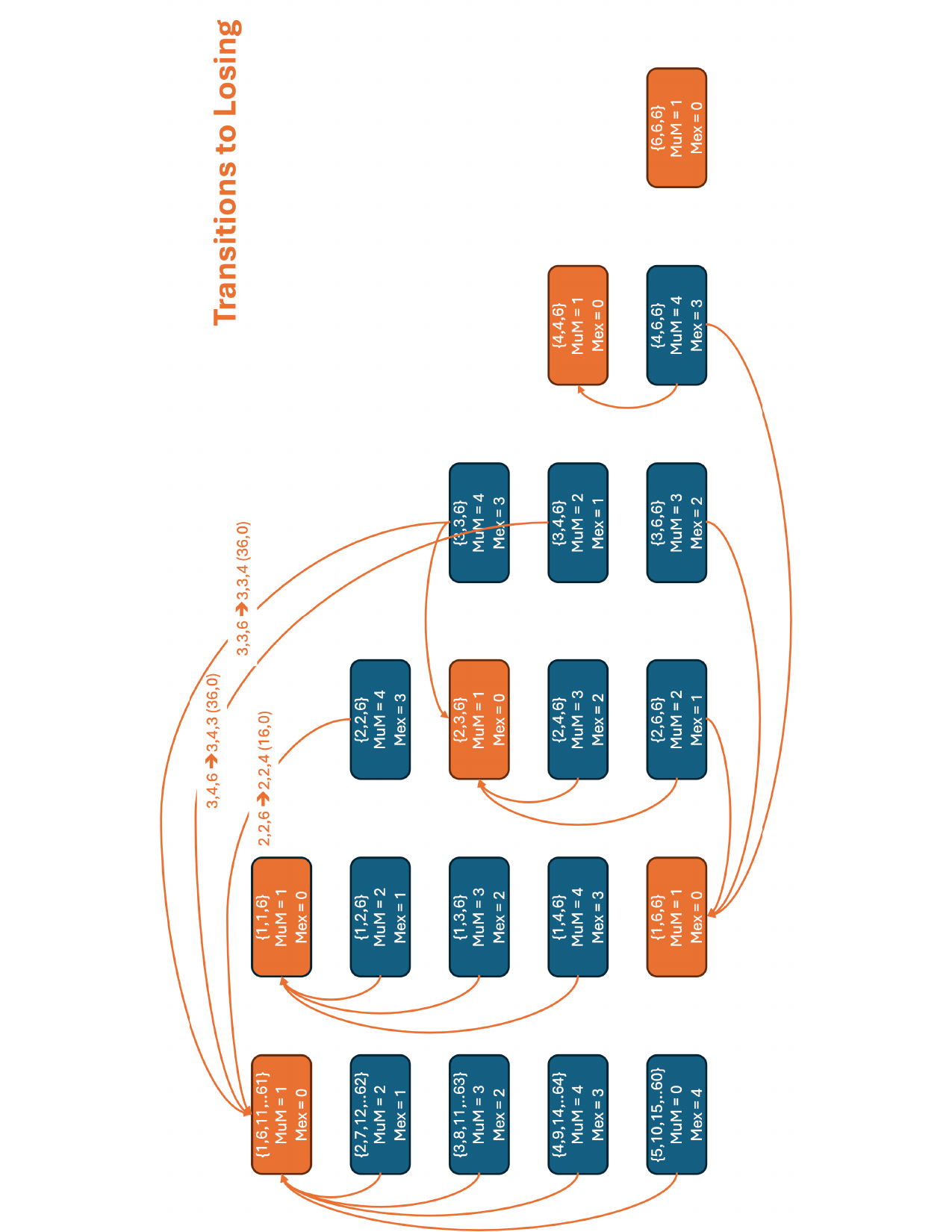}
  \caption{Transition graph to losing positions ($p=5$).}
  \label{fig:losing-mex}
\end{figure}

\clearpage
\begin{figure}[htbp!]
  \centering
  \includegraphics[width=1\textwidth, keepaspectratio]{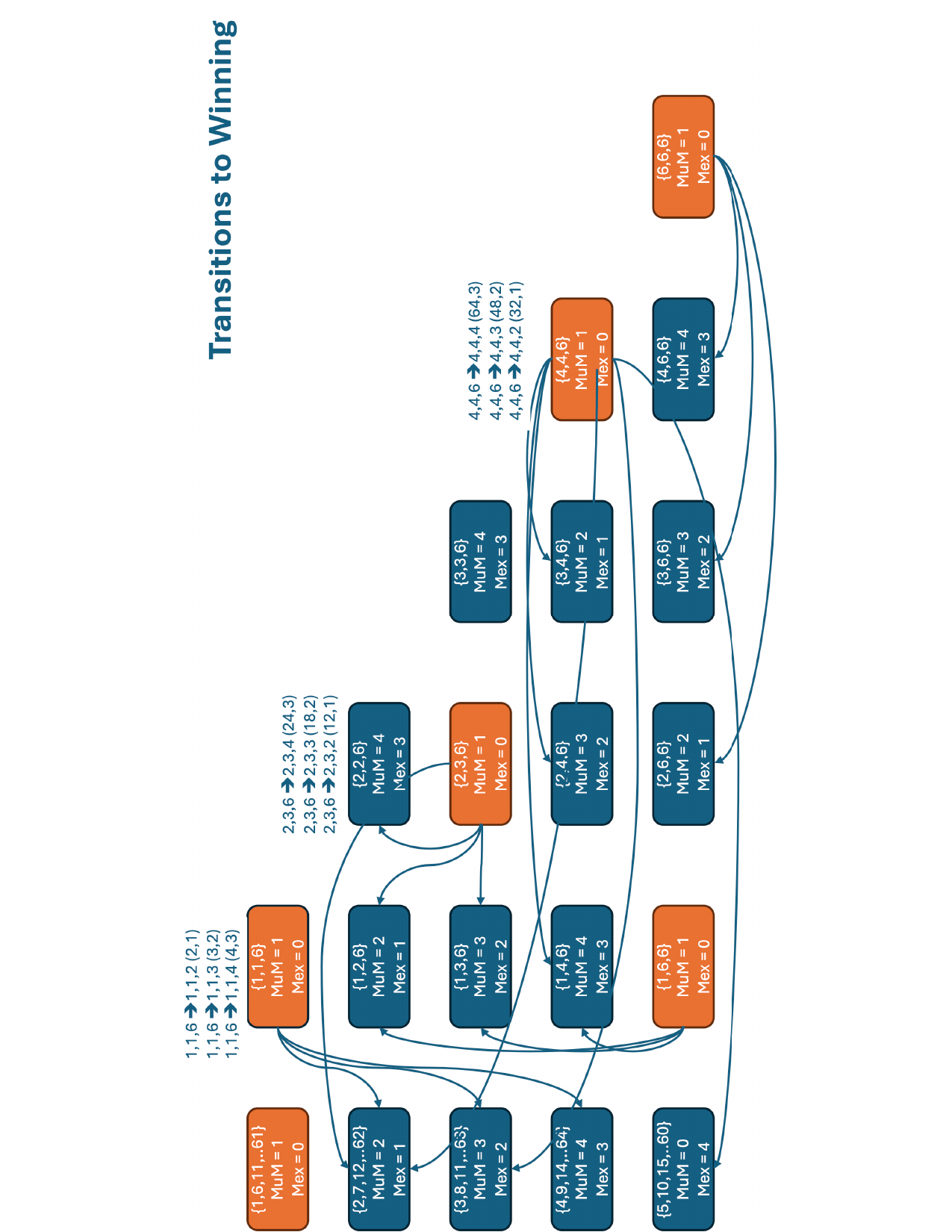}
  \caption{Transition graph to winning positions ($p=5$).}
  \label{fig:winning-mex}
\end{figure}

\clearpage
\begin{example}[Recursive Computation of Grundy-like Value $G(h)$ for $p=5$]
For a single heap, we compute the Grundy-like value $G(h) = \operatorname{mex}\{G(h-r) \mid 1 \le r < 5, h-r \ge 1 \}$.
\begin{center}
\begin{tabular}{c|l|c}
\toprule
\textbf{$h$} & \multicolumn{1}{c|}{\textbf{Successor Grundy Set}} & \textbf{$G(h)$} \\ \midrule
1 & $\varnothing$                        & 0 \\
2 & $\{G(1)=0\}$                     & 1 \\
3 & $\{G(1)=0, G(2)=1\}$            & 2 \\
4 & $\{G(1)=0, G(2)=1, G(3)=2\}$   & 3 \\
5 & $\{G(1)=0, G(2)=1, G(3)=2, G(4)=3\}$ & 4 \\
6 & $\{G(2)=1, G(3)=2, G(4)=3, G(5)=4\}$ & 0 \\
7 & $\{G(3)=2, G(4)=3, G(5)=4, G(6)=0\}$ & 1 \\
\bottomrule
\end{tabular}
\end{center}
Note that for $h \le p$, $G(h)=h-1$. For $h > p$, the pattern becomes periodic with period $p$, so $G(h) = (h-1)\bmod p$. This aligns with our theorem, as $\operatorname{MM}(h) = h \bmod 5$ for the mumber (with identity 1) and $G(h) = (h-1) \bmod 5$ for the Grundy value (with identity 0). See details in Appendix \ref{recursive-mex-table}.
\end{example}

\section{Analysis of Composite Moduli via CRT}\label{sec:crt-analysis}
The general theory holds for any modulus $m$, but for composite $m$, the Chinese Remainder Theorem (CRT) provides a powerful analytical tool.

\subsection{The Decomposed State Vector}
For a composite modulus $m$ with prime-power factors $\{m_1, \dots, m_N\}$, the state of a position $H$ can be analyzed via a \emph{decomposed state vector} $V(H) = (V_1, \dots, V_N)$, where $V_j \equiv \prod h_i \pmod{m_j}$. By the CRT, the losing condition $P \equiv 1 \pmod m$ is equivalent to the state vector being the identity, $V(H) = (1,1,\dots,1)$. See the worked out decomposition example of MuM\textsubscript{15} in Appendix~\ref{mod-3-mod-5-simplified-mumbers}.  

This isomorphism means playing MuM with a composite modulus $m$ is strategically equivalent to playing $N$ independent MuM games on the factor moduli $\{m_1, \dots, m_N\}$ simultaneously. For a square-free modulus like $m=30=2 \cdot 3 \cdot 5$, this results in a full decomposition into prime-modulus games. For a modulus like $m=90=2 \cdot 3^2 \cdot 5$, the maximal decomposition is into the games on moduli $\{2, 9, 5\}$.

\subsection{Analogy Table: Nim vs MuM}\label{analogy-table-nim-vs-mum}
\begin{table}[htbp!]
\centering
\caption{A comparison of the algebraic structures of Classical Nim and MuM.}
\begin{tabular}{@{}lll@{}}
\toprule
\textbf{Feature} & \textbf{Classical Nim (Nimbers)} & \textbf{MuM (Mumbers)} \\ \midrule
Operation & Additive (XOR, $\oplus$) & Multiplicative ($\cdot \bmod p$) \\
Identity & 0 & 1 \\
Game Value & Grundy Number ($g$) & Mumber ($MM$) \\
Sum of Games & $g(G_1) \oplus g(G_2)$ & $MM(G_1) \cdot MM(G_2) \pmod p$ \\
Structure & Vector space over $\GF(2)$ & Multiplicative group $\ZZ_p^{*}$ \\
Value Domain & Non-negative integers & Residues modulo $p$ \\
\bottomrule
\end{tabular}
\end{table}

\clearpage
\section{Extension to Polynomial Fields: A Model for \texorpdfstring{$\GF(p^{n})$}{GF(p\textasciicircum n)}}\label{sec:polyMum}

\subsection{Motivation: Returning to the Starting Point}
Our original motivation arose from a desire to build a *pedagogical bridge* to the algebra of finite fields---especially the field $\GF(2^{8})$ used in the AES S-box. We now show that the core principles of MuM extend from the ring $\ZZ_{m}$ to any finite field $\GF(p^{n})$.

\subsection{The Challenge of Polynomial Representation}\label{subsec:poly_challenge}
A direct translation reveals a structural challenge. In numerical MuM over $\ZZ_p$, the rule $1 \le r < p$ guarantees that $r \not\equiv 0 \pmod p$, ensuring any move $h \to h-r$ changes the residue. This simple guarantee breaks down in the polynomial world. The mapping from an integer heap $h$ to a field element $s(h) = \text{poly}(h) \pmod{I(x)}$ is many-to-one (e.g., in $\GF(2^{3})$, $h=9$ and $h'=2$ both map to $x$). A player could make a "synonym move" ($9 \to 2$) that reduces the integer but leaves the field element unchanged, allowing them to pass a losing turn. To create a sound game, we must refine the definition of the game's objects.

\subsection{The Canonical Heap Model for Poly-MuM}\label{subsec:polyMumDef}
We resolve the synonym problem by defining the game using unique representatives.
\begin{definition}[Polynomial MuM]
\leavevmode
\begin{itemize}
  \item \textbf{Field:} The game is played in $\GF(p^{n}) \cong \GF_{p}[x]/\langle I(x) \rangle$.
  \item \textbf{Canonical Heap:} For every non-zero field element $s$, its \emph{canonical integer representative}, $C(s)$, is the unique integer $h \in \{1, \dots, p^n-1\}$ corresponding to its standard polynomial representation. Heaps are a multiset of these canonical integers.
  \item \textbf{State Invariant:} The product $P_{\text{field}} = \prod_{h \in H} s(h)$. A position is \emph{losing} iff $P_{\text{field}} = 1$.
  \item \textbf{Legal Move:} A player chooses a heap $h$ and replaces it with a new canonical heap $h'$ such that $h' < h$. Because all heaps are canonical, this ensures $s(h') \neq s(h)$, making every move meaningful.
\end{itemize}
\end{definition}

\subsection{Game Theory of Poly-MuM}
The game theory directly parallels the numerical version.
\paragraph{Winning and Losing.} From a winning position ($P_{\text{field}} \neq 1$) with a heap $h \ge p^n-1$, a move to a losing state is guaranteed. The target heap is $k = C(s(h) \cdot P_{\text{field}}^{-1})$, and the move $h \to k$ is always a valid reduction.
\paragraph{Stranded Positions.} If $P_{\text{field}}\ne1$ but all heaps are $h < p^{n}-1$, the position is stranded. The player consolidates all heaps into a single new heap $H_{\text{new}} = C(P_{\text{field}})$, from which a winning move is then guaranteed.

\subsection{Conclusion: A Playable Model for the AES S-Box}
The Poly-MuM framework, played over $\GF(2^{8})$, provides a direct, game-based realisation of the algebra used in the AES S-box. Optimal play requires precisely the same operations---field multiplication and inversion modulo an irreducible polynomial---that underpin the S-box. Thus, Poly-MuM fulfils our original goal of translating sophisticated field-theoretic ideas into an engaging and intuitive gameplay mechanic.

\begin{table}[htbp!]
\centering
\caption{Multiplicative Inverses in $\GF(2^{3})$ with $I(x) = x^3+x+1$}
\label{tab:gf8_inverses}
\begin{tabular}{@{}cccccc@{}}
\toprule
\textbf{Polynomial} & \textbf{Integer} & \textbf{Power} & \textbf{Inverse} & \textbf{Inverse} & \textbf{Inverse} \\
\textbf{`s`} & \textbf{Rep.} & \textbf{of `x`} & \textbf{(Polynomial)} & \textbf{(Integer)} & \textbf{(Power)} \\
\midrule
$1$ & 1 & $x^0$ or $x^7$ & $1$ & 1 & $x^0$ or $x^7$ \\
$x$ & 2 & $x^1$ & $x^2+1$ & 5 & $x^6$ \\
$x^2$ & 4 & $x^2$ & $x^2+x+1$ & 7 & $x^5$ \\
$x+1$ & 3 & $x^3$ & $x^2+x$ & 6 & $x^4$ \\
$x^2+x$ & 6 & $x^4$ & $x+1$ & 3 & $x^3$ \\
$x^2+x+1$ & 7 & $x^5$ & $x^2$ & 4 & $x^2$ \\
$x^2+1$ & 5 & $x^6$ & $x$ & 2 & $x^1$ \\
\bottomrule
\end{tabular}
\end{table}

\clearpage
\section{Future Directions}\label{future-directions}

While extensive work has been done on nimbers, we believe there is
untapped potential in developing the theory of mumbers --- modular
analogs that operate under multiplicative invariants. The MuM game
framework provides a rich space to explore these ideas. Potential future directions include:
\begin{itemize}
\item Exploring mumber addition to construct a full modular field for gameplay.
\item Investigating other games that reduce to mumber-based Grundy-like strategies under multiplication mod p.
\item Visualizing mumber transitions using multi-dimensional geometric structures to gain insights.
\item Formalizing MuM's connection to or departure from Sprague-Grundy theory.
\end{itemize}

\section*{Statements and Declarations}
\textbf{Funding} None.\\
\textbf{Conflicts of Interest} None.\\
\textbf{Data Availability} All data are contained in this article.\\
\textbf{AI usage} AI language models (ChatGPT: OpenAI, Gemini: Google) were used for minor copy-editing; the authors are responsible for the content.

\bibliographystyle{plain}
\bibliography{references}

\clearpage
\appendix
\section{Appendix}\label{appendix}

\subsection{Recursive Mex Table}\label{recursive-mex-table}

The following table shows states in MuM\textsubscript{5} with 3 heaps and recursive computation of their Grundy values starting from a single heap.

\begin{sidewaystable}[htbp!]
\small
\begin{longtable}{|l|l|l|l|l|l|l|}
\caption{Recursive Mex Calculation for Sample States ($p=5$)} \label{tab:mex_example} \\
\hline
\makecell{\textbf{Heap 1}} &
\makecell{\textbf{Heap 2}} &
\makecell{\textbf{Heap 3}} &
\makecell{\textbf{Single Heap}\\\textbf{or Product}} &
\makecell{\textbf{Product}\\\textbf{Mod 5}} &
\makecell{
  \textbf{\(G\)-value}
} &
\makecell{\textbf{Recursive}\\\textbf{Mex}} \\
\hline
\endfirsthead
\multicolumn{7}{c}%
{{\bfseries \tablename\ \thetable{} -- continued from previous page}} \\
\hline
\makecell{\textbf{Heap 1}} &
\makecell{\textbf{Heap 2}} &
\makecell{\textbf{Heap 3}} &
\makecell{\textbf{Single Heap}\\\textbf{or Product}} &
\makecell{\textbf{Product}\\\textbf{Mod 5}} &
\makecell{
  \textbf{\(G\)-value}
} &
\makecell{\textbf{Recursive}\\\textbf{Mex}} \\
\hline
\endhead
\hline \multicolumn{7}{|r|}{{Continued on next page}} \\ \hline
\endfoot
\hline
\endlastfoot
\multicolumn{3}{|l|}{\{5,10,15,20,25,30,35,40,45,50,55,60\}} & 60 & 0 & 4 & 4 \\
\multicolumn{3}{|l|}{\{1,6,11,16,21,26,31,36,41,46,51,56,61\}} & 61 & 1 & 0 & 0 \\
\multicolumn{3}{|l|}{\{2,7,12,17,22,27,32,37,42,47,52,57,62\}} & 62 & 2 & 1 & 1 \\
\multicolumn{3}{|l|}{\{3,8,13,18,23,28,33,38,43,48,53,58,63\}} & 63 & 3 & 2 & 2 \\
\multicolumn{3}{|l|}{\{4,9,14,19,24,29,34,39,44,49,54,59,64\}} & 64 & 4 & 3 & 3 \\
\hline
1 & 1 & 6 & 6 & 1 & 0 & 0 \\
1 & 2 & 6 & 12 & 2 & 1 & 1 \\
1 & 3 & 6 & 18 & 3 & 2 & 2 \\
1 & 4 & 6 & 24 & 4 & 3 & 3 \\
1 & 6 & 6 & 36 & 1 & 0 & 0 \\
2 & 2 & 6 & 24 & 4 & 3 & 3 \\
2 & 3 & 6 & 36 & 1 & 0 & 0 \\
2 & 4 & 6 & 48 & 3 & 2 & 2 \\
2 & 6 & 6 & 72 & 2 & 1 & 1 \\
3 & 3 & 6 & 54 & 4 & 3 & 3 \\
3 & 4 & 6 & 72 & 2 & 1 & 1 \\
3 & 6 & 6 & 108 & 3 & 2 & 2 \\
4 & 4 & 6 & 96 & 1 & 0 & 0 \\
4 & 6 & 6 & 144 & 4 & 3 & 3 \\
6 & 6 & 6 & 216 & 1 & 0 & 0 \\
\hline
\end{longtable}
\end{sidewaystable}

\clearpage
\subsection{Decomposition of MuM\textsubscript{15}}\label{mod-3-mod-5-simplified-mumbers}

The following table shows states in MuM\textsubscript{15} and their decomposition into the `mod 3` and `mod 5` sub-games. The losing positions (L) are precisely those where the product is 1 in both sub-games.

\begin{sidewaystable}[htbp!]
\small
\begin{longtable}{|l|l|l|l|l|l|l|l|l|l|l|l|l|l|}
\caption{Decomposition of MuM\textsubscript{15} States} \label{tab:mum15_decomp} \\
\hline
H1 & M3 & M5 & H2 & M3 & M5 & H3 & M3 & M5 & \makecell{M3\\Prod} & \makecell{M5\\Prod} & Status \\
\hline
\endfirsthead
\multicolumn{13}{c}%
{{\bfseries \tablename\ \thetable{} -- continued from previous page}} \\
\hline
H1 & M3 & M5 & H2 & M3 & M5 & H3 & M3 & M5 & \makecell{M3\\Prod} & \makecell{M5\\Prod} & Status \\
\hline
\endhead
\hline \multicolumn{13}{|r|}{{Continued on next page}} \\ \hline
\endfoot
\hline
\endlastfoot
11 & 2 & 1 & 11 & 2 & 1 & 11 & 2 & 1 & 2 & 1 & W \\
11 & 2 & 1 & 11 & 2 & 1 & 13 & 1 & 3 & 1 & 3 & W \\
11 & 2 & 1 & 11 & 2 & 1 & 14 & 2 & 4 & 2 & 4 & W \\
11 & 2 & 1 & 11 & 2 & 1 & 16 & 1 & 1 & 1 & 1 & L \\
11 & 2 & 1 & 13 & 1 & 3 & 13 & 1 & 3 & 2 & 4 & W \\
11 & 2 & 1 & 13 & 1 & 3 & 14 & 2 & 4 & 1 & 2 & W \\
11 & 2 & 1 & 13 & 1 & 3 & 16 & 1 & 1 & 2 & 3 & W \\
11 & 2 & 1 & 14 & 2 & 4 & 14 & 2 & 4 & 2 & 1 & W \\
11 & 2 & 1 & 14 & 2 & 4 & 16 & 1 & 1 & 1 & 4 & W \\
11 & 2 & 1 & 16 & 1 & 1 & 16 & 1 & 1 & 2 & 1 & W \\
13 & 1 & 3 & 13 & 1 & 3 & 13 & 1 & 3 & 1 & 2 & W \\
13 & 1 & 3 & 13 & 1 & 3 & 14 & 2 & 4 & 2 & 1 & W \\
13 & 1 & 3 & 13 & 1 & 3 & 16 & 1 & 1 & 1 & 4 & W \\
13 & 1 & 3 & 14 & 2 & 4 & 14 & 2 & 4 & 1 & 3 & W \\
13 & 1 & 3 & 14 & 2 & 4 & 16 & 1 & 1 & 2 & 2 & W \\
13 & 1 & 3 & 16 & 1 & 1 & 16 & 1 & 1 & 1 & 3 & W \\
14 & 2 & 4 & 14 & 2 & 4 & 14 & 2 & 4 & 2 & 4 & W \\
14 & 2 & 4 & 14 & 2 & 4 & 16 & 1 & 1 & 1 & 1 & L \\
14 & 2 & 4 & 16 & 1 & 1 & 16 & 1 & 1 & 2 & 4 & W \\
16 & 1 & 1 & 16 & 1 & 1 & 16 & 1 & 1 & 1 & 1 & L \\
\hline
\end{longtable}
\end{sidewaystable}
\end{document}